\newcommand\myline[1][]{%
   \,\tikz[baseline]\draw[thick,#1](0,-\dp\strutbox)--(0,\ht\strutbox);\,%
}
\DeclarePairedDelimiter\ceil{\lceil}{\rceil}
\DeclarePairedDelimiter\floor{\lfloor}{\rfloor}
\theoremstyle{definition}
\def\BibTeX{{\rm B\kern-.05em{\sc i\kern-.025em b}\kern-.08em
    T\kern-.1667em\lower.7ex\hbox{E}\kern-.125emX}}
\newtheorem*{remark}{Remark}
\newtheorem{prop}{Proposition}
\begin{document}

\title{Index and Composition Modulation\\
\thanks{The authors wish to acknowledge the support of the Bristol Innovation $\&$ Research Laboratory of Toshiba Research Europe Ltd. }
}

\author{Ferhat Yarkin~\IEEEmembership{Graduate~Student~Member,~IEEE} and Justin P.~Coon~\IEEEmembership{Senior~Member,~IEEE}
\thanks{F. Yarkin and J. P. Coon are with the Department of Engineering Science, University of Oxford, Parks Road, Oxford, OX1 3PJ, U.K. E-mail: \{ferhat.yarkin and justin.coon\}@eng.ox.ac.uk}}

\maketitle

\begin{abstract}
In this paper, we propose a novel modulation concept which we call \emph{index and composition modulation (ICM)}. In the proposed concept, we use indices of active/deactive codeword elements and compositions of an integer to encode information. In this regard, we first determine the activated codeword elements, then we exploit energy levels of these elements to identify the compositions. We depict a practical scheme for using ICM with orthogonal frequency division multiplexing (OFDM) and show that OFDM with ICM (OFDM-ICM) can enhance the spectral efficiency (SE) and error performance of OFDM-IM. We design an efficient low-complexity detector for the proposed technique. Moreover, we analyze the error and SE performance of the OFDM-ICM technique and show that it is capable of outperforming existing OFDM benchmarks in terms of error and SE performance.
\end{abstract}

\begin{IEEEkeywords}
Composition modulation (CM), index modulation (IM), orthogonal frequency division multiplexing (OFDM).
\end{IEEEkeywords}

\section{Introduction}

Index modulation (IM) that encodes data into the combinations of active/deactive codeword elements offers a variety of attractive advantages including better error performance and improved energy/spectral efficiency over conventional modulation and multiplexing schemes. Hence, the adaptation of IM to  orthogonal  frequency  division  multiplexing  (OFDM) has attracted several researchers’ attentions \cite{Basar2013, Fan2015, Mao2017,Wen2017, Yarkin2020set, Yarkin2020} and it has been shown by these studies that OFDM with IM (OFDM-IM) can achieve better error performance, higher data rate and higher energy efficiency than conventional OFDM. Apart from IM, in \cite{Yarkin2020comp}, we proposed two new modulation concepts, \emph{weak composition modulation (WCM)} and \emph{composition modulation (CM)}, that embed data using integer compositions, and showed that the applications of these concepts to OFDM bring noteworthy improvements in error performance. 


5G's key enabling technologies are expected to satisfy the user demands for high data rate, ultra-reliable transmission, and very low latency. However, it is anticipated that 6G and the following next-generation networks will require even higher data rates, better error performance, and lower latency due to the proliferation of a variety of new applications, including extended reality services, telemedicine, haptics, flying vehicles, brain-computer interfaces, and connected autonomous systems \cite{Saad2019}. In this regard, the existing OFDM and OFDM-IM schemes require further improvement in error performance and spectral efficiency (SE) to be deployed efficiently in next-generation networks.

Motivated by the advantages of the IM and CM techniques as well as the requirements of next-generation networks, we propose a novel concept that we call \emph{index and composition modulation (ICM)}. In this context, we encode information using the combinations of active/deactive codewords elements as in IM and the combinatorial framework of integer compositions as in CM. We depict a practical model based on OFDM with ICM (OFDM-ICM). Moreover, we design an efficient low-complexity detector to overcome the high complexity arising from maximum-likelihood (ML) detector. We also investigate the bit error rate (BER) and SE of the proposed scheme in this paper. Our analytical, as well as numerical, findings indicate that our novel design can achieve a substantially better SE and BER  performance than OFDM, OFDM-IM, OFDM-WCM and OFDM-CM. 

\section{Index and Composition Modulation}\label{sec:secII}

An ICM codebook consists of $L_{ICM}$ codewords and each codeword is an $N$-tuple of nonnegative real numbers which can be regarded as a vector $\textbf{x}_l=\big\{x_{l1}, x_{l2}, \ldots, x_{lN}\big\}$, $l=1, 2, \ldots, L_{ICM}$, in an Euclidean space $\mathcal{S}$ of $N$ dimensions where $x_{ln} \in \mathbb{R}_{\ge 0}$ and $n=1, 2, \ldots, N$. In an ICM codeword, $K$ elements are positive real numbers where $K \le N$, whereas the remaining $N-K$ elements are zeros. Here, we call the $K$ positive real elements and $N-K$ zero elements as activated and deactivated elements, respectively, since no energy is used to form the zero elements. All other codewords of the codebook can be obtained by permuting the order of the $N$ elements in the codeword. The values of $K$ activated elements in each codeword are chosen according to the integers that form the compositions of an integer $I$ with $K$ parts. Hence, one can obtain different codewords by using different compositions of an integer. More explicitly, when we form $\textbf{x}_l$, we first determine the indices of $K$ activated elements in an $N$-tuple and denote the set of these indices as $\mathcal{I} \coloneqq \big\{\alpha_1, \alpha_2, \ldots, \alpha_K\big\}$ where $\alpha_k \in \big\{1, 2, \ldots, N\big\}$ and $k=1, 2, \ldots, K$. Then, the $k$th activated element $x_{l\alpha_k}$ of the $N$-tuple is chosen as $x_{l\alpha_k}=\sqrt{\nu_k}$ where $\nu_k\in \big\{ 1, \ldots, I-K+1\big\}$ is the $k$th summand of the composition and $\nu_1+\nu_2+\ldots+\nu_K=I$. Note that one needs to pick $I \ge K$ to be able to construct the ICM codewords. Since the number of permutations regarding the order of $K$ activated and $N-K$ deactivated elements is  $\binom{N}{K}$ and the number of integer compositions of $I$ with $K$ parts is $\binom{I-1}{K-1}$, the number of codewords in an ICM codebook is $L_{ICM}=\binom{N}{K}\binom{I-1}{K-1}$. In Table \ref{tab:table0}, we give a codebook generation example for an ICM scheme when $N=I=3$ and $K=2$. As seen from the leftmost column of the table, we first determine the indices of $K=2$ activated elements, then we determine the compositions of an integer $I=3$ with $K=2$ parts and map the summands in those compositions to the elements of the ICM codewords according to the indices as shown in the rightmost column.  


\begin{table}[t!]
\centering
\caption{Codebook generation example for ICM when $N=I=3$ and $K=2$.}
\label{tab:table0}
\begin{tabular}{|c|c|c|}
\hline
\begin{tabular}[c]{@{}c@{}}Indices of Activated \\ Elements, $\mathcal{I}$\end{tabular} & Compositions & ICM Codeword \\ \hline
$\big\{1, 2\big\}$ & \multirow{3}{*}{3=1+2} & $\textbf{x}_1=\big\{1, \sqrt{2}, 0\big\}$ \\ \cline{1-1} \cline{3-3} 
$\big\{1, 3\big\}$ &  & $\textbf{x}_2=\big\{1,  0, \sqrt{2}\big\}$ \\ \cline{1-1} \cline{3-3} 
$\big\{2, 3\big\}$ &  & $\textbf{x}_3=\big\{0, 1, \sqrt{2}\big\}$ \\ \hline
$\big\{1, 2\big\}$ & \multirow{3}{*}{3=2+1} & $\textbf{x}_4=\big\{\sqrt{2}, 1, 0\big\}$ \\ \cline{1-1} \cline{3-3} 
$\big\{1, 3\big\}$ &  & $\textbf{x}_5=\big\{\sqrt{2}, 0, 1\big\}$ \\ \cline{1-1} \cline{3-3} 
$\big\{2, 3\big\}$ &  & $\textbf{x}_6=\big\{0, \sqrt{2}, 1\big\}$ \\ \hline
\end{tabular}
\end{table}

\section{OFDM with Index and Composition Modulation}\label{sec:secIII}

In this section, we present a practical system model in which we apply the ICM concept to OFDM transmissions.

\subsection{Transmitter}

Analogously to OFDM-IM, $m$ input bits enter the transmitter and these bits are divided into $B=m/f$ blocks, each of them having $f$ input bits. The total number of the subcarriers, $N_T$, is also split into  $B=N_T/N$ blocks whose size is $N$. We will focus on the $b$th block in what follows. $f$ information bits are further divided into three parts, having $f_1$, $f_2$ and $f_3$ bits with $f_1+f_2+f_3=f$ in the $b$th block.

The first  $f_1=\floor{\log_2\binom{N}{K}}$ bits are used to determine the activated $K$ subcarriers. Then, the $f_2=\floor{\log_2\binom{I-1}{K-1}}$ bits are used to determine the specific composition of an integer $I$ with $K$ parts where $I \ge K$. The energies of the symbols on the activated subcarriers are chosen according to the specific composition of an integer $I$ with $K$ parts. Let us denote the sets that comprise  the indices of the activated subcarriers and the energies of the activated subcarriers in the $b$th block, respectively, as $\mathcal{I}^b\coloneqq\big\{\alpha_1, \alpha_2, \ldots, \alpha_K \big\}$ and $\beta^b\coloneqq\big\{\nu_1E_T/I,\nu_2E_T/I, \ldots, \nu_KE_T/I\big\}$ where $\alpha_k \in \big\{1, 2, \ldots, N\big\}$, $k=1, 2, \ldots, K$, $\nu_k\in\big\{1, 2, \ldots, I-K+1\big\}$, and $\nu_1+\nu_2+\ldots+\nu_K=I$  . Once we decide $\mathcal{I}^b$ and $\beta^b$ according to the $f_1$ and $f_2$ bits, respectively, $f_3=K\log_2 M$ bits are used to determine the $M$-PSK constellation symbols carried by the activated subcarriers. Hence, the SE of the OFDM-ICM scheme per subcarrier can be given as 
\begin{align}\label{eq:eq2last}
    \eta=\frac{f_1+f_2+f_3}{N}=\frac{\floor{\log_2\binom{N}{K}}+\floor{\log_2\binom{I-1}{K-1}}+K\log_2M}{N}.
\end{align}

The mapping of $f_1$ bits to the indices of activated subcarriers and $f_2$ bits to the subcarrier's energies can be implemented by using a look-up table. In Table \ref{tab:table2}, we present an example of how these mappings are performed when $N=I=4$ and $K=3$. Since $f_1=\floor{\log_2\binom{4}{3}}=2$, 
the first two bits, $p_1$, entering the OFDM-ICM encoder of the corresponding bitstream $[p_2~p_1]$ are used to determine the indices of activated subcarriers. Then, the remaining $f_2=\floor{\log_2 \binom{I-1}{K-1}}=\floor{\log_2 \binom{3}{2}}=1$ bit, $p_2$, is used to determine the specific composition of $I=4$ with $K=3$ parts. For example, when $[p_2~ p_1]=[0~ 1~ 1]$ bits enter the OFDM-ICM transmitter, the first two bits '$p_1=11$' choose the indices of activated subcarriers as $\mathcal{I}^b\coloneqq\big\{2, 3, 4\big\}$, then the remaining bit '$p_2=0$' chooses the set $\beta^b\coloneqq\big\{E_T/4, E_T/4, 2E_T/4\big\}$ for the subcarriers' energies. Then, we map the set, $\beta^b$, to the energies of the activated subcarriers and obtain the OFDM-ICM symbol vector as shown in the table.

\begin{table}[t!]
\centering
\caption{Look-up table implementation example for OFDM-ICM when $N=I=4$ and $K=3$. }
\label{tab:table2}
\resizebox{\columnwidth}{!}{%
\begin{tabular}{|c|l|l|c|c|}
\hline
Compositions & \multicolumn{1}{c|}{$\beta^b$} & \multicolumn{1}{c|}{$\mathcal{I}^b$} & \begin{tabular}[c]{@{}c@{}}OFDM-ICM\\ Symbol Vector\end{tabular} & $[p_2\myline[dashed] ~p_1]$ \\ \hline
\multirow{4}{*}{4=1+1+2} & \multirow{4}{*}{$\big\{\frac{E_T}{4}, \frac{E_T}{4}, \frac{2E_T}{4}\big\}$} & $\big\{1, 2, 3\big\}$ & $[\frac{E_T}{4} ~\frac{E_T}{4} ~\frac{2E_T}{4} ~0]$ & [0~\myline[dashed]0 0] \\ \cline{3-5} 
 &  & $\big\{1, 2, 4\big\}$ & $[\frac{E_T}{4} ~\frac{E_T}{4} ~0 ~\frac{2E_T}{4}]$ & [0~\myline[dashed]0 1] \\ \cline{3-5} 
 &  & $\big\{1, 3, 4 \big\}$ & $[\frac{E_T}{4} ~0 ~\frac{E_T}{4} ~\frac{2E_T}{4}]$ & [0~\myline[dashed]1 0] \\ \cline{3-5} 
 &  & $\big\{2, 3, 4\big\}$ & $[0 ~\frac{E_T}{4} ~\frac{E_T}{4} ~\frac{2E_T}{4}]$ & [0~\myline[dashed]1 1] \\ \hline
\multirow{4}{*}{4=1+2+1} & \multirow{4}{*}{$\big\{\frac{E_T}{4}, \frac{2E_T}{4}, \frac{E_T}{4}\big\}$} & $\big\{1, 2, 3\big\}$ & $[\frac{E_T}{4} ~\frac{2E_T}{4} ~\frac{E_T}{4} 0]$ & [1~\myline[dashed]0 0] \\ \cline{3-5} 
 &  & $\big\{1, 2, 4\big\}$ & $[\frac{E_T}{4} ~\frac{2E_T}{4} ~0 ~\frac{E_T}{4}]$ & [1~\myline[dashed]0 1] \\ \cline{3-5} 
 &  & $\big\{1, 3, 4 \big\}$ & $[\frac{E_T}{4} ~0 ~\frac{2E_T}{4} ~\frac{E_T}{4}]$ & [1~\myline[dashed]1 0] \\ \cline{3-5} 
 &  & $\big\{2, 3, 4\big\}$ & $[0 ~\frac{E_T}{4} ~\frac{2E_T}{4} ~\frac{E_T}{4}]$ & [1~\myline[dashed]1 1] \\ \hline
\multirow{4}{*}{4=2+1+1} & \multirow{4}{*}{$\big\{\frac{2E_T}{4}, \frac{E_T}{4}, \frac{E_T}{4} \big\}$} & $\big\{1, 2, 3\big\}$ & $[\frac{2E_T}{4} ~\frac{E_T}{4} ~\frac{E_T}{4} ~0]$ & unused \\ \cline{3-5} 
 &  & $\big\{1, 2, 4\big\}$ & $[\frac{2E_T}{4} ~\frac{E_T}{4} ~0 ~\frac{E_T}{4}]$ & unused \\ \cline{3-5} 
 &  & $\big\{1, 3, 4 \big\}$ & $[\frac{2E_T}{4} ~0 ~\frac{E_T}{4} ~\frac{E_T}{4}]$ & unused \\ \cline{3-5} 
 &  & $\big\{2, 3, 4\big\}$ & $[0 ~\frac{2E_T}{4} ~\frac{E_T}{4} ~\frac{E_T}{4}]$ & unused \\ \hline
\end{tabular}%
}
\end{table}

Once we decided the activated subcarriers and their energies according to the first $f_1+f_2$ bits,\footnote{$f_1+f_2$ bits can be mapped to the indices of activated subcarriers and the compositions without using a look-up table implementation since the mapping of $f_1$ bits to the indices of activated subcarriers and $f_2$ bits to the compositions, thus to the subcarriers' energies, can be performed without a look-up table as discussed in \cite{Basar2013} and \cite{Yarkin2020comp}, respectively.} then the remaining $f_3$ bits are used to modulate the signals on the activated subcarriers by using an $M$-PSK constellation, $\mathcal{M}$. Hence, in the $b$th block, the OFDM-ICM symbol vector can be written as $\textbf{x}^b= [x_1^b, x_2^b, \ldots, x_N^b]$ where $x_i^b\in \big\{\emptyset\big\} \cup \mathcal{M}$. The energy of the symbol carried by the $k$th activated subcarrier of the $b$th block is $E_k^b=|x_k^b|^2=\nu_kE_T/I$ where $k=1, 2, \ldots, K$. After obtaining symbol vectors of all blocks, the overall OFDM-ICM vector is formed as  $\textbf{x}\coloneqq [x(1), x(2), \ldots,x(N_T)]^T=[\textbf{x}^1,\ldots, \textbf{x}^b, \ldots, \textbf{x}^B]^T\in \mathcal{C}^{N_T\times 1}$. After this point, exactly the same operations as conventional OFDM are applied.\footnote{We assume that the elements of $\textbf{x}$ are interleaved sufficiently and the maximum spacing is achieved for the subcarriers.}

\begin{remark}
OFDM-ICM is equivalent to OFDM-IM and OFDM when $I=K$ and $I=K=N$, respectively. It is clear from  \eqref{eq:eq2last} that the proposed scheme is capable of providing a higher SE than OFDM-IM and OFDM-CM. It is also important to note that OFDM-ICM is fundamentally different from the OFDM-IM schemes in \cite{Mao2017,Wen2017, Yarkin2020set, Yarkin2020} due to the use of signal levels to encode information into compositions rather than different signals to encode information into index patterns or permutations.
\end{remark}

\subsection{Receiver}

At the receiver, the received signal is down-converted, and the cyclic prefix is then removed from each received baseband symbol vector before processing with an FFT. After employing an $N_T$-point FFT operation, the frequency-domain received signal vector can be written as 
\begin{align}\label{eq:eq2}
    \textbf{y} \coloneqq [y(1),y(2), \ldots, y(N_T)]^T=\textbf{X}\textbf{h}+\textbf{n}
\end{align}
where  $\textbf{X}=\text{diag}(\textbf{x})$. Moreover, $\textbf{h}$ and $\textbf{n}$ are $N_T\times 1$ channel and noise vectors, respectively. Elements of $\textbf{n}$ follow the complex-valued Gaussian distribution $\mathcal{CN}(0,N_0)$ where $N_0$ is the noise variance.

Since the encoding procedure for each block is independent of others, decoding can be performed independently at the receiver. Hence, using ML detection, the detected symbol vector for the $b$th block can be written as 
\begin{align}\label{eq:eq3}
    ({\hat{\mathcal{I}}^b, \hat{\beta}^b, \hat{\textbf{x}}^b})= \arg \min_{\mathcal{I}^b, \beta^b, \textbf{x}^b} ||\textbf{y}^b-\textbf{X}^b\textbf{h}^b||^2
\end{align}
where $\textbf{y}^b=[y((b-1)N+1), \ldots, y(bN)]^T$, $\textbf{X}^b=\text{diag}(\textbf{x}^b)$ and $\textbf{h}^b=[h((b-1)N+1), \ldots, h(bN)]^T$.

The optimum ML detector in \eqref{eq:eq3} performs $2^{\floor{\log_2\binom{N}{K}}+\floor{\log_2\binom{I-1}{K-1}}}M^K$ squared Euclidean distance calculations. To overcome the high complexity of the optimum ML detector,  we design the following low-complexity ML (LC-ML) detector for the OFDM-ICM technique:

\begin{enumerate}[leftmargin=*]
    \item Calculate the following log-likelihood ratio (LLR) for the $n$th subcarrier\footnote{Here, we drop the block superscript, $b$, for convenience since the proposed LC-ML detector can be operated for each block independently.} where $n\in\big\{1, 2, \ldots, N\big\}$:
    \begin{align}\label{eq:eq5}
        \delta(n)&=\ln \frac{\sum\limits_{x\in\mathcal{M}}\sum\limits_{E\in\Lambda}P(x(n)=\sqrt{E}x|y(n))}{P(x(n)=\emptyset|y(n))} \\ & \nonumber \propto \ln \frac{\sum\limits_{x\in\mathcal{M}}\sum\limits_{E\in\Lambda}\exp{(-|y(n)-h(n)\sqrt{E}x|^2/N_0)}}{\exp(-|y(n)|^2/N_0)}
    \end{align}
    where $x$ is a unit-energy $M$-PSK symbol.  $P(x(n)=\sqrt{E}x|y(n))$ stands for the probability that the symbol carried by the $n$th subcarrier, $x(n)$, is equal to $\sqrt{E}x$ given the received signal regarding the $n$th subcarrier, $y(n)$, whereas $P(x(n)=\emptyset|y(n))$ is the probability that the $n$th subcarrier is deactivated given $y(n)$.  Moreover, $\Lambda$ is the set which consists of the possible energy levels, i.e., $\Lambda\coloneqq\big\{E_T/I, 2E_T/I,\ldots, (I-K+1)E_T/I\big\}$. Note that the LLR value in \eqref{eq:eq5} can be approximated by \cite{Wen2017}
    \begin{align}
        \delta(n) \approx \frac{\exp{(-|y(n)-h(n)\sqrt{\hat{E}(n)}\hat{x}(n)|^2/N_0)}}{\exp(-|y(n)|^2/N_0)} 
    \end{align}
    where $(\hat{x}(n), \hat{E}(n))=\arg \min_{x,E} |y(n)-h(n)\sqrt{E}x|^2$.
    \item Sort the LLR values  $(\delta(1), \delta(2), \ldots,\delta(N))$ in descending order, i.e., $\delta(\rho_1)>\delta(\rho_2)>\ldots>\delta(\rho_N)$ where $\rho_n\in \big\{1, 2, \ldots, N\big\}$. Then, determine the indices of the activated subcarriers as $\hat{\mathcal{I}}=\big\{\rho_1, \rho_2, \ldots, \rho_K\big\}$.
    \item Sort the channel gains of the activated subcarriers in descending order, i.e., $|h(\gamma_1)|^2>|h(\gamma_2)|^2>\ldots>|h(\gamma_K)|^2$ where $\gamma_k \in \hat{\mathcal{I}}$ and $\Gamma\coloneqq\big\{\gamma_1, \gamma_2, \ldots, \gamma_K\big\}$.
    \item Determine the energy levels and $M$-ary symbols on the activated $K$ subcarriers by following the order in $\Gamma$ and using ML detection. More explicitly, we start with the $\gamma_1$th subcarrier and determine its energy level and $M$-ary symbol by $(\hat{x}(\gamma_1), \hat{E}(\gamma_1))=\arg \min_{x \in \mathcal{M},E \in \Lambda} |y(\gamma_1)-h(\gamma_1)\sqrt{E}x|^2$. The related composition can be obtained by $\hat{\nu}(\gamma_1)=\hat{E}(\gamma_1)I/E_T$. Then, we update $\Lambda$ according to previously estimated energy levels and the fact that $\hat{\nu} (\gamma_k) \ge 1$ as $\Lambda\coloneqq\big\{E_T/I, \ldots, (I-(\hat{\nu}(\gamma_1)+K-2))E_T/I\big\}$ and we proceed with the $\gamma_2$th subcarrier and determine its energy level and $M$-ary symbol. For the $\gamma_k$th, $1<k<K$, and $\gamma_K$th subcarriers, the sets in question can be written respectively as $\Lambda\coloneqq\big\{E_T/I, \ldots, (I-(K-k+{\Delta}_k))E_T/I\big\}$ and $\Lambda\coloneqq\big\{ (I-{\Delta}_K)E_T/I\big\}$  where ${\Delta}_k=\sum_{l=1}^{k}\hat{\nu}(\gamma_l)$ and ${\Delta}_K=\sum_{l=1}^{K}\hat{\nu}(\gamma_l)$.
 \end{enumerate}

For the proposed detector, one needs to perform $(I-K+1)M$ squared Euclidean distance calculations in \eqref{eq:eq5}. Hence, the proposed algorithm makes $N(I-K+1)M$ squared Euclidean distance calculations in total, which is considerably smaller than the complexity of the optimum ML detector.

\section{Performance Analysis}\label{sec:secIV}
\subsection{Bit-Error Rate}

An upper-bound on the average BER is given by the well-known union bound as follows
\begin{align}
    P_b \leq \frac{1}{f2^f}\sum_{i=1}^{2^f}\sum_{j=1}^{2^f}P(\textbf{X}^i\to\textbf{X}^j)D(\textbf{X}^i\to\textbf{X}^j)
\end{align}
where $P(\textbf{X}^i\to\textbf{X}^j)$ is the pairwise error probability (PEP) regarding the erroneous detection of $\textbf{X}^i$ as $\textbf{X}^j$ where $i \neq j$, $i,j\in \big\{1, \ldots,L\big\}$, $\textbf{X}^i=\text{diag}(\textbf{x}^i)$, $\textbf{X}^j=\text{diag}(\textbf{x}^j)$, and $D(\textbf{X}^i\to\textbf{X}^j)$ is the number of bits in error for the corresponding pairwise error event. Here, $L$ is  the codebook size for the proposed scheme. One can use the same PEP expression as in \cite{Basar2013} and substitute the codewords of the proposed scheme to obtain the upper bound on the average BER.   

\begin{remark}
The minimum Hamming distance between the sets that keep the subcarriers' energies is two, just like the index symbols of OFDM-IM. However, the proposed scheme can send conventional modulation symbols together with embedding information into the indices and/or compositions, and the minimum Hamming distance between the conventional modulation symbols is limited to one. This limits the diversity gain of the proposed scheme to one. 
\end{remark}

As a special case of OFDM-ICM, one can use only the indices and compositions to embed information and obtain a diversity gain.\footnote{For this special case, the SE is $\eta=\frac{\floor{\log_2\binom{N}{K}}+\floor{\log_2\binom{I-1}{K-1}}}{N}$.} However, in this case, the transmitted symbol vectors should be carefully chosen to get optimum error performance. One simple yet efficient strategy is choosing the transmitted symbols in a way that the angular difference between the different energy symbols is maximized. Hence, in this case, one can choose the signal on the $k$th activated subcarrier of the $b$th block as  $x_k^b=\sqrt{\nu_kE_T/I}\exp(\frac{2j\nu_k\pi}{I-K+1})$. As will be shown in the Numerical Results section, the error performance of this special case is promising for low SEs.


\subsection{Spectral Efficiency}\label{sec:secivb}
By rewriting $K$ and $I$ as $K=\alpha N$ and $I=\beta N$, respectively, the SE maximization problem is formulated as 
\begin{equation}\label{eq:eqopt}
    \begin{aligned}
    &\underset{\alpha}{\text{maximize}} && \eta=\frac{\log_2\binom{N}{\alpha N}+\log_2\binom{\beta N-1}{\alpha N-1}+\alpha N\log_2 M}{N} \\
    &\text{subject to} && 0 < \alpha \le 1, \\
    &&& \alpha \le \beta.
\end{aligned}
\end{equation}

\begin{prop}
The value of $\alpha$ that maximizes \eqref{eq:eq7} is  
\begin{align}\label{eq:eq8}
    \alpha^{*}=\frac{M(\beta+1) - \sqrt{M^2(\beta-1)^2+4M\beta}}{2(M-1)}.
\end{align}
Thus, one can pick the optimum number of activated subcarriers in an OFDM block as 
\begin{align}\label{eq:eq9}
    K^{*} \in\big\{ \floor{\alpha^{*}N}, \ceil{\alpha^{*}N}\big\}
\end{align}
where $\floor{.}$ and $\ceil{.}$ are floor and ceiling operations, respectively. 
\end{prop}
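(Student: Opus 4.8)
The plan is to treat the objective in \eqref{eq:eqopt} as a smooth function of the continuous variable $\alpha$ and locate its unique stationary point, replacing the binomial coefficients by their Stirling/entropy approximations, which are accurate to leading order in $N$. Writing $H(x)=-x\log_2 x-(1-x)\log_2(1-x)$ for the binary entropy function, one has $\log_2\binom{N}{\alpha N}\approx N H(\alpha)$, and since $\binom{\beta N-1}{\alpha N-1}=\tfrac{\alpha}{\beta}\binom{\beta N}{\alpha N}$ differs from $\binom{\beta N}{\alpha N}$ only by a lower-order factor, $\log_2\binom{\beta N-1}{\alpha N-1}\approx \beta N\,H(\alpha/\beta)$. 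Dividing by $N$ gives the continuous surrogate
\[
\eta(\alpha)\approx H(\alpha)+\beta H(\alpha/\beta)+\alpha\log_2 M,
\]
which I would optimize in place of the exact, floor-constrained expression.

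Next I would differentiate. Using $H'(x)=\log_2\frac{1-x}{x}$ together with the chain rule on the middle term (note $\tfrac{d}{d\alpha}\beta H(\alpha/\beta)=\log_2\frac{\beta-\alpha}{\alpha}$), the stationarity condition $\eta'(\alpha)=0$ collapses to
\[
\log_2\frac{1-\alpha}{\alpha}+\log_2\frac{\beta-\alpha}{\alpha}+\log_2 M=0,
\]
i.e.\ $M(1-\alpha)(\beta-\alpha)=\alpha^2$. Clearing denominators and collecting terms yields the quadratic
\[
(M-1)\alpha^2-M(1+\beta)\alpha+M\beta=0,
\]
with roots $\frac{M(1+\beta)\pm\sqrt{M^2(1+\beta)^2-4M(M-1)\beta}}{2(M-1)}$. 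A short simplification of the discriminant, $M^2(1+\beta)^2-4M(M-1)\beta=M^2(\beta-1)^2+4M\beta$, recovers exactly the radicand appearing in \eqref{eq:eq8}.

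It then remains to select the correct root and confirm it is the maximizer. Since $H$ is strictly concave and $\beta H(\alpha/\beta)$ is a concave-after-affine composition, $\eta(\alpha)$ is strictly concave, so any interior stationary point is the unique global maximum; equivalently $\eta'(\alpha)\to+\infty$ as $\alpha\to0^+$ and $\eta'(\alpha)\to-\infty$ as $\alpha\to\min(1,\beta)^-$, so the derivative vanishes exactly once on the feasible interval. To pin this zero to the smaller (minus-sign) root, set $g(\alpha)=(M-1)\alpha^2-M(1+\beta)\alpha+M\beta$; then $g(0)=M\beta>0$, while direct evaluation gives $g(1)=-1<0$ and $g(\beta)=-\beta^2<0$. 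Because $g$ opens upward for $M>1$, both $1$ and $\beta$ lie strictly between its roots, which forces the smaller root into $\bigl(0,\min(1,\beta)\bigr)$ and the larger root above $\min(1,\beta)$. Hence the feasible maximizer is the minus-sign root, precisely the $\alpha^{*}$ of \eqref{eq:eq8}, and the admissible integer choices $K^{*}\in\{\floor{\alpha^{*}N},\ceil{\alpha^{*}N}\}$ follow by rounding to the nearest feasible integer.

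The hard part, I expect, is the rigorous justification of the continuous relaxation rather than the algebra: one must control the Stirling error terms and, in particular, argue that the $-1$ offsets in $\binom{\beta N-1}{\alpha N-1}$ and the dropped floor operations perturb the optimizer by only $O(1/N)$, so that the stationary point of $\eta(\alpha)$ genuinely identifies the integer maximizer up to the two rounded candidates in \eqref{eq:eq9}.
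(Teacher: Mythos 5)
Your proof is correct, and it reaches the same quadratic $(M-1)\alpha^2-M(1+\beta)\alpha+M\beta=0$ and the same root as the paper, but by a genuinely different route. The paper works directly with the \emph{discrete} objective: it forms the finite difference quotient $\Delta\eta/\Delta\alpha$ over the minimal step $\Delta\alpha=1/N$ (i.e.\ $K\to K+1$), sets it to zero, and uses the binomial ratios $\binom{N}{K+1}/\binom{N}{K}=\frac{N-K}{K+1}$ and $\binom{I-1}{K}/\binom{I-1}{K-1}=\frac{I-K}{K}$ to obtain $M(N-K)(I-K)=K(K+1)$, which reduces to the same quadratic after the substitution $K=\alpha N$, $I=\beta N$ and dropping the $O(1/N)$ term $K+1\approx K$; the choice of the minus-sign root is simply asserted in a footnote via the feasibility constraints. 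You instead pass through the continuous entropy surrogate $H(\alpha)+\beta H(\alpha/\beta)+\alpha\log_2 M$ (which is essentially the asymptotic SE of the paper's Proposition 2) and differentiate. What your route buys is the strict concavity of the surrogate, hence a clean uniqueness and global-optimality argument, plus an explicit proof (via the signs of $g(0)$, $g(1)$, $g(\beta)$) that the minus-sign root is the one lying in $(0,\min(1,\beta))$ — details the paper leaves implicit. What the paper's route buys is that the unit-step difference condition is closer to the exact discrete optimality condition (it is exact for the unfloored objective up to the single $K+1\approx K$ step), which more directly justifies the two-candidate rounding in \eqref{eq:eq9}. Both arguments share the same unaddressed gap you flag at the end: neither rigorously controls the floor operations, the $-1$ offsets, or the Stirling error terms, so at the level of rigor of the paper your proof is on equal footing.
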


\begin{proof}
For $0 < \alpha_1< \alpha_2 \le 1$, one can write
\begin{align}\label{eq:eq7}
    \frac{\Delta \eta}{\Delta \alpha}&=\frac{\log_2\binom{N}{\alpha_2 N}+\log_2\binom{\beta N-1}{\alpha_2 N-1}+\alpha_2 N\log_2 M}{N(\alpha_2-\alpha_1)}\\ \nonumber & -\frac{\log_2\binom{N}{\alpha_1 N}+\log_2\binom{\beta N-1}{\alpha_1 N-1}+\alpha_1 N\log_2 M}{N(\alpha_2-\alpha_1)}.
\end{align}
Since $K \in \mathbb{Z}^{+}$, the minimum value of  $\Delta \alpha=\alpha_2-\alpha_2$ is $\Delta \alpha = \frac{1}{N}$.  By substituting $\alpha_2=\alpha_1+\frac{1}{N}$ into \eqref{eq:eq7}, the optimal value of $\alpha$ can be obtained when $\frac{\Delta \eta}{\Delta \alpha}=0$. By arranging this equation, one can obtain classical quadratic program and this classical program yields the solution in \eqref{eq:eq8}.\footnote{Note that the quadratic program produces two solutions. However, only the solution in \eqref{eq:eq8} satisfies the constraints in \eqref{eq:eqopt}.} 

\end{proof}

\begin{prop}
As $N \to \infty$, the asymptotic SE of OFDM-IM can be given as
\begin{align}\label{eq:eq12}
    \eta &\sim H(\alpha)+ \log_2 (\beta^{\beta}/\alpha^{\alpha}) -(\beta-\alpha)\log_2(\beta - \alpha)+\alpha \log_2 M
\end{align}
where $H(\alpha)$ is the entropy function of $\alpha$. 
\end{prop}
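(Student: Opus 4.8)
The plan is to establish the limit by applying Stirling's approximation term by term to the three summands appearing in \eqref{eq:eq2last}. First I would discard the floor operations, since each contributes at most a unit offset that vanishes after division by $N$; it therefore suffices to analyze the per-subcarrier limits of $\frac{1}{N}\log_2\binom{N}{\alpha N}$, of $\frac{1}{N}\log_2\binom{\beta N-1}{\alpha N-1}$, and of the linear term $\frac{1}{N}\,\alpha N\log_2 M=\alpha\log_2 M$ as $N\to\infty$. Because $\eta$ is bounded in $N$ while the right-hand side of \eqref{eq:eq12} is a constant, the symbol $\sim$ is to be read as convergence, i.e.\ the difference tends to zero.

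For a generic binomial $\binom{aN}{bN}$ with $0<b<a$, I would write $\binom{aN}{bN}=\frac{(aN)!}{(bN)!\,((a-b)N)!}$ and invoke $\ln n!=n\ln n-n+O(\ln n)$. The decisive observation is that the $N\log_2 N$ contributions, which carry the coefficient $a-b-(a-b)=0$, cancel identically, and likewise the linear-in-$N$ contributions $-aN+bN+(a-b)N=0$ cancel; the residual $O(\ln N)$ Stirling corrections die after division by $N$. This yields the clean limit $\frac{1}{N}\log_2\binom{aN}{bN}\to a\log_2 a-b\log_2 b-(a-b)\log_2(a-b)$.

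I would then specialize this identity twice. With $a=1$, $b=\alpha$ the first binomial gives $-\alpha\log_2\alpha-(1-\alpha)\log_2(1-\alpha)=H(\alpha)$, matching the entropy term. For the second binomial I would first absorb the unit offsets in $\beta N-1$ and $\alpha N-1$ into the $O(\ln N)$ error, so that $a=\beta$, $b=\alpha$ produces $\beta\log_2\beta-\alpha\log_2\alpha-(\beta-\alpha)\log_2(\beta-\alpha)$; rewriting $\beta\log_2\beta-\alpha\log_2\alpha=\log_2(\beta^{\beta}/\alpha^{\alpha})$ reproduces the middle two terms of \eqref{eq:eq12}. Adding the exactly linear contribution $\alpha\log_2 M$ then completes the derivation.

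The main obstacle is bookkeeping rather than depth: one must verify that the divergent $N\log_2 N$ and linear $N$ pieces cancel \emph{exactly} within each binomial, since otherwise $\eta$ would fail to converge to a finite constant, and one must check that the feasible domain $0<\alpha<1$ with $\alpha<\beta$ keeps the argument of every logarithm strictly positive so the Stirling expansion applies. The boundary cases $\alpha\in\{0,1\}$ or $\alpha=\beta$ must be excluded or handled through the usual convention $0\log_2 0=0$.
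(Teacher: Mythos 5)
Your proof is correct and takes essentially the same route as the paper's, which simply invokes Stirling's formula in the SE expression and discards the terms that vanish as $N \to \infty$; your version merely makes the cancellation bookkeeping and the handling of the unit offsets in $\binom{\beta N-1}{\alpha N-1}$ explicit.
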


\begin{proof}
The proof is simply obtained by using the Stirling's formula\footnote{ $N! \sim \sqrt{2\pi N} (N/e)^N$ when $N \to \infty$.} in the SE and ignoring the expressions that approach zero as $N \to \infty$. 
\end{proof}

\section{Numerical Results}\label{sec:secV}
In this section, we provide numerical BER and SE results. In BER figures, ``OFDM-ICM $(N, K, I, M)$'' is the proposed OFDM-ICM scheme having $K$ activated subcarriers out of $N$ in each block, choosing the energies of the activated subcarriers according to the compositions of $I$ into $K$ parts and employing $M$-PSK modulation on the activated subcarriers, whereas ``OFDM-ICM $(N, K, I)$'' stands for the special case of OFDM-ICM that embeds information into only indices of subcarriers and compositions of an integer $I$ with $K$ parts.  ``OFDM-IM $(N, K, M)$'' signifies the conventional OFDM-IM scheme in which $K$ out of $N$ subcarriers are activated to send $M$-PSK modulated symbols in each block. Finally, ``OFDM-WCM $(N, I, \lambda)$, Alg. 1'' stands for the OFDM-WCM scheme that employs Algorithm 1 in \cite{Yarkin2020comp} and uses $\lambda$ to alter the SE as well as integer compositions of $I$ with $N$ parts to decide the energies of $N$ subcarriers, whereas ``OFDM-CM $(N, I, M)$'' is the OFDM-CM scheme that determines the energies of $N$ subcarriers according to the composition of an integer $I$ with $N$ parts and carries $M$-ary PSK symbols on each subcarrier. For the simulated schemes in this section, we pick $E_T=N$, thus the average energy per subcarrier is assumed to be one.

\begin{figure}[t!]
		\centering
		\includegraphics[width=7.5cm,height=5.5cm]{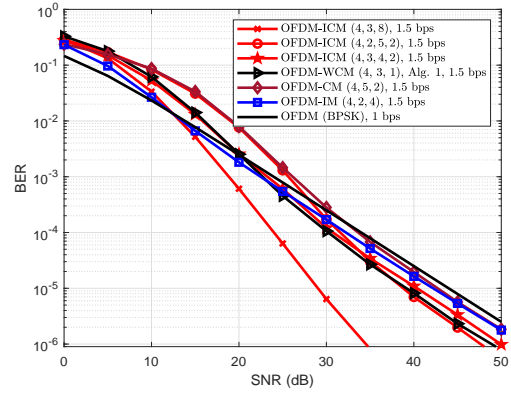}
		\caption{BER comparison of OFDM-ICM with OFDM-WCM, OFDM-CM, OFDM-IM, and OFDM when $N=4$, $K\in \big\{2, 3\big\}$, $I \in \big\{4, 5, 8\big\}$, $M\in \big\{2, 4\big\}$ and $\lambda=1$.}
		\label{fig:fig1}
\end{figure}

In Fig. \ref{fig:fig1}, we compare the BER performance of OFDM-ICM with that of OFDM-WCM, OFDM-CM, OFDM-IM, and OFDM. Here, except for OFDM (BPSK), all schemes achieve the same SE, which is 1.5 bits per subcarrier (bps). As seen from the figure, OFDM-ICM $(4, 3, 8)$ considerably outperforms all other schemes by introducing an additional diversity gain. Moreover, the BER performance of OFDM-ICM $(4, 3, 4, 2)$ is very close to that of OFDM-WCM $(4, 3, 1)$, Alg. 1 and these schemes outperform the OFDM-CM, OFDM-IM,  and OFDM schemes, especially at high signal-to-noise ratio (SNR), since they produce more symbols whose minimum Hamming distance is two. However, since the minimum Euclidean distance for the overall codebook of OFDM-ICM and OFDM-WCM is less than that of OFDM-IM, they are outperformed by OFDM-IM at low SNR. Due to the same reason, OFDM-ICM $(4, 2, 5, 2)$ is outperformed by all other schemes except for OFDM-CM $(4, 5, 2)$ at low SNR; however, it outperforms all other schemes except for OFDM-ICM $(4, 3, 8)$ at high SNR.    

\begin{figure}[t!]
		\centering
		\includegraphics[width=7.5cm,height=5.5cm]{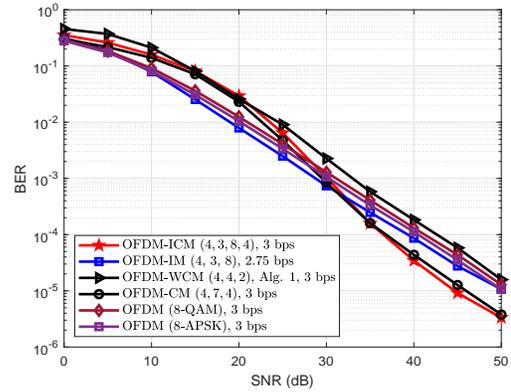}
		\caption{BER comparison of OFDM-ICM with OFDM-IM, OFDM-WCM, OFDM-CM, and OFDM when $N=4$, $K=3$, $I\in \big\{4, 7, 8\big\}$, $M \in \big\{4, 8\big\}$, and $\lambda=2$.}
		\label{fig:fig2}
\end{figure}

    Fig. \ref{fig:fig2} compares the BER performance of OFDM-ICM with that of OFDM-IM, OFDM-WCM, OFDM-CM, and OFDM.\footnote{For the curve related to OFDM (8-APSK), we use the optimum 8-ary amplitude and phase shift keying (APSK) design in \cite{Lucky62,Thomas74}. In this design, constellation consists of two circles (inner and outer) with four signals on each of them. The radii of the inner and outer circles are $r_1=1/\sqrt{2}\simeq 0.707$ and $r_2=1.366$, respectively.} As observed from the figure, OFDM-ICM $(4, 3, 8, 4)$ and OFDM-CM $(4, 7, 4)$ exhibit very close BER performance and they outperform all other schemes, especially at high SNR. Compared to Fig. \ref{fig:fig1}, the effectiveness of the OFDM-WCM scheme against the other schemes substantially degrades as it is required to employ higher order modulation in this scheme to achieve 3 bps. However, the OFDM-ICM scheme preserves its effectiveness in both data rates. It is also worth mentioning that OFDM-ICM $(4, 3, 8, 4)$ considerably outperforms OFDM (8-QAM) and OFDM (8-APSK) that use the energies of the constellation symbols for encoding like the proposed scheme.

\begin{figure}[t!]
		\centering
		\includegraphics[width=7.5cm,height=5.5cm]{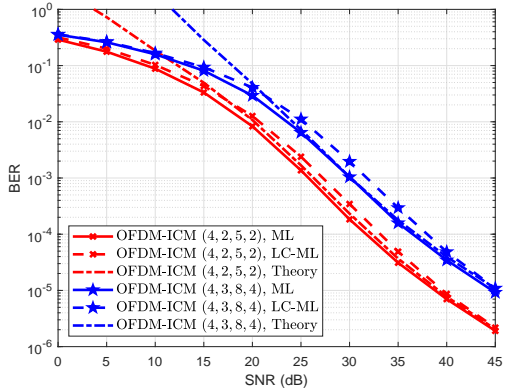}
		\caption{BER comparison of the proposed LC-ML detector with the optimum ML detector for  OFDM-ICM.}
		\label{fig:fig3}
\end{figure}

Fig. \ref{fig:fig3} demonstrates the BER performance of the proposed LC-ML detector of the OFDM-ICM scheme compared to the optimum ML detector when $N=4$, $K\in \big\{2, 3\big\}$, $I \in \big\{5, 8\big\}$, and $M \in \big\{ 2, 4\big\}$. As seen from the figure, the performance of
the proposed LC-ML detector is very close to that of the
optimum ML detector, especially at low and high SNR. On the other hand, curves with extensions ``Theory'' in the legend
relate to the theoretical upper-bound results for the proposed scheme. As observed from the figure, upper-bound curves are consistent with computer simulations, especially at high SNR.

\begin{figure}[t!]
		\centering
		\includegraphics[width=7.5cm,height=5.5cm]{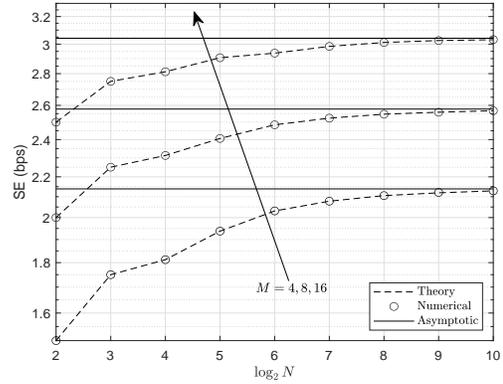}
		\caption{Comparison of theoretical, numerical and asymptotic SE results for OFDM-ICM when $\beta=0.5$ and $M\in \big\{4, 8, 16\big\}$.}
		\label{fig:figse}
\end{figure}

In Fig. \ref{fig:figse}, we compare theoretical, numerical and asymptotic SE results for OFDM-ICM when $\beta=0.5$ and $M\in \big\{4, 8, 16\big\}$.\footnote{Here, curves with extensions ``Theory'' in the legend are obtained by substituting \eqref{eq:eq9} into \eqref{eq:eq2last}, whereas those with with extensions ``Numerical'' in the legend are obtained by substituting all possible values of $K$ into \eqref{eq:eq2last} and choosing the maximum SE. On the other hand, curves with extensions ``Asymptotic'' in the legend are obtained by substituting \eqref{eq:eq8} into \eqref{eq:eq12}.} Fig. \ref{fig:figse} verifies the validity of the SE expressions in Section \ref{sec:secivb} since the theoretical and numerical SE results match perfectly and they approach asymptotic SE as $N$ increases. On the other hand, the SE is considerably improved when we increase $N$. When $N=4$, the OFDM-ICM schemes can achieve only $70.14 \%$, $77.60 \%$, and $82.20 \%$ of the asymptotic SE for $M=2, 4,$ and 8, respectively, whereas these percentages increase to $81.82 \%$,  $87.30 \%$, and $90.41 \%$ for $M=2, 4,$ and 8, respectively, when $N=8$.

\section{Conclusion}\label{sec:secVI}
In this paper, we proposed a novel modulation concept, which we call ICM. We depicted a practical OFDM-ICM scheme that is capable of improving the SE of OFDM-IM by embedding data into the compositions of an integer and using energy levels to identify them. We demonstrated through simulations and theoretical calculations that the proposed scheme can provide noteworthy improvements compared to OFDM, OFDM-IM,  OFDM-WCM, and OFDM-CM. 

As future work, the proposed scheme could be generalized
by utilizing in-phase and quadrature dimensions of the modulation symbols as in \cite{Fan2015} and space time coding with coordinate interleaving could be combined with the proposed scheme as in \cite{Basar2015} to obtain additional diversity gain.
\bibliographystyle{IEEEtran}
\bibliography{journal_paper}

\end{document}